\theoremstyle{plain}
\newtheorem{theorem}{Theorem}
\newtheorem{definition}{Definition}
\newtheorem{problem}{Problem}
\newtheorem{remark}{Remark}
\newtheorem{lemma}{Lemma}
\DeclareMathOperator{\luntil}{{\mathcal{U}}}
\newcommand{\levent}{\lozenge}
\newcommand{\lalways}{\square}
\newtheorem{prop}{Proposition}
\title{\LARGE \bf
Control Barrier Function for Linearizable Systems with High Relative Degrees from Signal Temporal Logics: A Reference Governor Approach
}
\author{Kaier Liang, Mingyu Cai, and Cristian-Ioan Vasile
\thanks{Kaier Liang and Cristian-Ioan Vasile  are with the Mechanical Engineering and Mechanics Department at Lehigh University, PA, USA: {\tt\small \{kal221, cvasile\}@lehigh.edu}}
\thanks{Mingyu Cai is with the Department of Mechanical Engineering at University of California Riverside, CA, USA: {\tt\small mingyu.cai@ucr.edu}}
}
\begin{document}
\maketitle
\thispagestyle{empty}
\pagestyle{empty}

\begin{abstract}
This paper considers the safety-critical navigation problem with Signal Temporal Logic (STL) tasks.
We developed an explicit reference governor-guided control barrier function (ERG-guided CBF) method that enables the application of first-order CBFs to high-order linearizable systems. 
This method significantly reduces the conservativeness of the existing CBF approaches for high-order systems. Furthermore, our framework provides safety-critical guarantees in the sense of obstacle avoidance by constructing the margin of safety and updating direction of safe evolution in the agent's state space.
To improve control performance and enhance STL satisfaction, we employ efficient gradient-based methods for iteratively learning optimal parameters of ERG-guided CBF. We validate the algorithm through
both high-order linear and nonlinear systems. A video demonstration can be found on: \url{https://youtu.be/ZRmsA2FeFR4}
\end{abstract}

\section{Introduction}
\label{sec:introduction}
Control design for safety-critical systems subject to state constraints has become an important research direction in robotic applications. Furthermore, robots are frequently tasked with complex assignments, which can be expressed in Signal Temporal Logic (STL) \cite{maler2004monitoring}, a formal language interpreted over continuous-time signals used to formulate tasks with time windows and deadlines. Recent research employs STL to learn rules of autonomous control systems from data for interpretable reasoning~\cite{leung2023backpropagation, li2023learning, aasi2023time}.

Control Barrier Functions (CBFs)~\cite{ames2019control} have recently drawn considerable interest for safety-critical applications. By constructing a forward invariant safe set via the barrier functions and solving for the control input using quadratic programming, CBFs ensure that the system remains within the safe set. CBFs provide a highly effective tool for designing provably safe controllers that are computationally efficient~\cite{ames2019control}. In~\cite{lindemann2018control}, time-varying CBFs were used to enforce a fragment of STL specifications for first-order systems. For systems with a relative degree greater than one, \cite{xiao2021high} introduces High-Order Control Barrier Functions (HOCBFs). However, HOCBFs are typically conservative, which could render the problem infeasible when the safe set is restricted~\cite{xiao2023barriernet}.

Moreover, constructing CBFs involves hand-designing their structures and fine-tuning their parameters with significant impacts on performance. 
Learning CBF parameters from expert demonstrations using an optimization-based approach was explored in~\cite{robey2020learning}.
A differentiable learning framework for class K functions for exponential CBFs was developed in~\cite{ma2022learning} that facilitates generalization to novel environments.

The neural controller BarrierNet~\cite{xiao2023barriernet} was proposed to reduce conservativeness for HOCBFs. It is used to learn the parameters of STL specifications~\cite{liu2023learning} to improve performance.
Another work~\cite{molnar2021model} proposed first-order CBFs for a safe set in velocity space and applied for velocity tracking using the control Lyapunov functions (CLF) to ensure safety-critical navigation. 
However, the approach requires designing CLF parameters to enable sufficiently fast tracking performance.
 

Model Predictive Control (MPC)~\cite{rawlings2000tutorial} is a well-established method to address constraint control. Control input in MPC is obtained from an optimization problem over a fixed time horizon at each time step. Its ability to handle various constraints has been proven successful in numerous real-world applications~\cite{mayne2014model}. However, the heavy reliance of MPC on online optimization often results in greater computational burden. The integration of STL and MPC is discussed in~\cite{raman2014model,sadraddini2015robust}, which involves the construction of demanding mixed-integer linear programs.

To overcome these challengers, the explicit reference governor (ERG) was introduced in \cite{nicotra2018explicit}. The  methodology first constructs a dynamic safety margin (DSM) based on the zero order safety set, followed by defining a navigation field (NV) to indicate the direction of adjustment for the reference governor. However, the construction of DSM and NV can be complex, often depending on specific models and constraints. In~\cite{li2023governor}, the authors construct the DSM for feedback-linearizable control-affine nonlinear systems for safe navigation using the barrier function, while the governor update direction is defined by projection to a predefined reference.


 This paper proposes ERG-guided CBFs for STL satisfaction with safety guarantees, which enables the application of first-order CBFs to feedback linearizable systems with high relative degrees. We employ a first-order linear system as a reference governor and construct the dynamic safety margin based on~\cite{li2023governor}. Then the navigation field is developed using time-varying CBFs to ensure that the high-order system navigates safely through narrow passages and maximizes the satisfaction of STL tasks. In addition, we apply a gradient-based method for auto-tuning the parameters of feedback control gain to improve the performance of satisfying STL specifications.

\section{Preliminary}

Consider the nonlinear control affine system:
\begin{equation}
\dot{x} = \boldsymbol{f}(x)+\boldsymbol{g}(x)u
\label{eq: nonlinear_sys}
\end{equation}
where ${x} \in \mathbb{R}^n$ is the state of the system and ${u}\in \mathcal{U}\subset \mathbb{R}^m$ is the control input, $\boldsymbol{f}: \mathbb{R}^{n}\rightarrow\mathbb{R}^{n}$ and $\boldsymbol{g}: \mathbb{R}^{n}\rightarrow\mathbb{R}^{n\times m}$ are locally Lipschitz
continuous functions, $\mathcal{U}$ is a box constraint, i.e., $u_{\text{min}}\leq u\leq u_{\text{max}}$.

We assume system~\eqref{eq: nonlinear_sys} is feedback linearizable~\cite{khalil2015nonlinear} and results in a linear time-invariant dynamical system:%
\begin{equation}
\begin{aligned}
    \dot{{x}}&={Ax}+{Bu}\\
    y &= Cx
\label{eq: linear_sys}
\end{aligned}
\end{equation}
where $y \in \mathbb{R}^p$ is the system output.

\subsection{Signal Temporal Logic (STL)}

Signal Temporal Logic~\cite{maler2004monitoring} is a predicate logic defined over signals ${x} : \mathbb{R}^+ \to \mathbb{R}^n$. 
Let $\mu ::= h(x) \geq 0$ represent a predicate, where $h: \mathbb{R}^n \rightarrow \mathbb{R}$ is an evaluation function of a state $x\in \mathbb{R}^n$.

We consider the following fragment of STL:
\begin{equation*}
    \begin{aligned} & \psi::=\top|\mu| \neg \mu \mid \psi_1 \land \psi_2 \\ & \phi::=G_{[a, b]} \psi\left|F_{[a, b]} \psi\right| \psi_1 U_{[a, b]} \psi_2 \mid \phi_1 \land \phi_2\end{aligned}
\end{equation*}

where $\psi, \phi_1, \phi_2$ are STL formulas.
The temporal eventually, always, and until operators with time interval $I$ are $\levent_I$, $\lalways_I$ and $\luntil_I$, respectively.

The semantics of STL are evaluated over trajectories $x(t)$:
$$
\begin{aligned}
& ({x}, t) \models \mu &&\Leftrightarrow h({x}(t)) \geq 0 \\
& ({x}, t) \models \neg \phi &&\Leftrightarrow \neg(({x}, t) \models \phi) \\
& ({x}, t) \models \phi_1 \land \phi_2 &&\Leftrightarrow({x}, t) \models \phi_1 \land({x}, t) \models \phi_2 \\
& ({x}, t) \models \phi_1 \luntil_I \phi_2 &&\Leftrightarrow \exists t_1 \in t+I \text { s.t. }\left({x}, t_1\right) \models \phi_2 \\
&&& \qquad \land \forall t_2 \in\left[t, t_1\right],\left({x}, t_2\right) \models \phi_1 \\
& ({x}, t) \models \levent_I \phi \quad &&\Leftrightarrow \exists t_1 \in t+I \text { s.t. }\left({x}, t_1\right) \models \phi \\
& ({x}, t) \models \lalways_I \phi \quad &&\Leftrightarrow \forall t_1 \in t+I,\left({x}, t_1\right) \models \phi .
\end{aligned}
$$

\subsection{Time-varying CBF for STL}
CBFs are often used to design safe controllers by ensuring that a safe set is forward invariant: a system that starts in the safe set stays in the safe set~\cite{ames2019control}. The controller is obtained through an efficient quadratic program (QP).
In~\cite{lindemann2018control}, the time-varying CBF $b(x, t)$ given by
%
{\small\begin{equation}
\label{eq: cbf}
\begin{aligned}
&\text{min} \quad u^\top Q u\\
    &\sup _{{u} \in \mathcal{U}} \frac{\partial {b}({x}, t)^\top}{\partial {x}}(f({x})+g({x}) {u})+\frac{\partial {b}({x}, t)}{\partial t} \geq-\alpha({b}({x}, t))
\end{aligned}
\end{equation}
}%
are used to ensure the satisfaction of formulas from a fragment of STL, where $\alpha$ is a class $K$ function. 
If~\eqref{eq: cbf} holds for all $x(t)$ and $b(x(0), 0) >0$ then the system is positively forward invariant, i.e., $b(x(t), t) \geq 0$ $\forall t$.

In~\cite{lindemann2018control}, CBF is designed according to predicates of the STL formula.
These CBFs are combined to achieve complex tasks involving conjunction and temporal operators.
For example, the CBF for $\phi = \phi_1 \land \phi_2 \land \ldots \land \phi_n$ uses an approximation of the minimum and is given by
%
\begin{equation}\label{eq: cbf_and}
    b_{\phi}(x(t),t) = -\ln \sum_{i=1}^n \exp(-b_{\phi_i}(x(t),t)).
\end{equation}
Therefore if $b_{\phi}(x(t),t) >0$, then $\forall i,$ $b_{\phi_i}(x(t),t) > 0 $.

\subsection{Explicit Reference Governor}

The explicit reference governor (ERG)~\cite{nicotra2018explicit} is an efficient control design technique for constraint handling. Given the dynamics in~\eqref{eq: linear_sys}, a defined desired reference $r(t): \mathbb{R} \rightarrow \mathbb{R}^p$, and the constraints function $c(x(t),r(t)): \mathbb{R}^n\times \mathbb{R}^p\rightarrow \mathbb{R}$ that requires:
\begin{equation}
    c(x(t),r(t)) \geq 0
\label{eq: constr}
\end{equation}
The ERG framework generates the auxiliary reference $g(t):  \mathbb{R}\rightarrow \mathbb{R}^p$ such that the constraint $c(x(t), g(t)) \geq 0$ is satisfied for all $t\geq 0$. The auxiliary reference is updated as:
\begin{equation}
    \dot g = \Delta(x,g)\rho(r, g),
\label{eq: au_ref}
\end{equation}%
where $\Delta(x,g) \in \mathbb{R}$ is called dynamic safety margin (DSM) and $\rho{(r, g)} \in \mathbb{R}^p$ is called the navigation field (NV).

Let $\bar{x}_g: \mathbb{R}^p \rightarrow \mathbb{R}^n$ be a continuous mapping that denotes a corresponding desired state to $x$ associated with reference $g$.  

\begin{definition}\label{def: DSM}
  ~\cite{nicotra2018explicit} For a fixed reference $g$, a continuous function $\Delta: \mathbb{R}^n \times \mathbb{R}^p \rightarrow \mathbb{R}$ is a dynamic safety margin if
    \begin{enumerate}
        \item $\Delta(x, g)>0 \Rightarrow c(x(t),g)>0$, for all $t \geq 0$;
        \item $\Delta(x, g) \geq 0 \Rightarrow c(x(t),g) \geq 0$, for all $t \geq 0$;
        \item $\Delta(x, g)=0 \Rightarrow \Delta(x(t),g) \geq 0$, for all $t \geq 0$;
        \item For all $\delta>0$, there exists $\epsilon>0$ such that $c\left(\bar{x}_g, g\right) \geq \delta \Rightarrow$ $\Delta\left(\bar{x}_g, g\right) \geq \epsilon$.
    \end{enumerate}
\end{definition}

The dynamic safety margin guarantees the satisfaction of constraints. Specifically, larger values of DSM indicate the system is safer with respect to the constraints. $\delta$ can be seen as the static safety margin. Next, the navigation field specifies the direction of system updates for safe tracking.

\begin{definition}\label{def: NV}~\cite{nicotra2018explicit}A piecewise continuous function $\rho(r, g): \mathbb{R}^p \times \mathbb{R}^p \rightarrow \mathbb{R}^p$ is a navigation field if for any initial condition $g(0)$ satisfying the constraint~\eqref{eq: constr}, the system
\begin{equation}
\label{eq: nv}
    \dot g = \rho(r,g)
\end{equation}
is such that
\begin{enumerate}
    \item $\sup _{(r, g) \in H}\|\rho(r, g)\|$ is finite for each compact set $H$.
    \item For any piecewise continuous reference $r(t) \in \mathbb{R}^p$, the result $g(t)$ satisfies $c\left(\bar{x}_r, g(t)\right) \geq \delta$.
    \item For any constant reference $r$ such that $c\left(\bar{x}_r,r\right) \geq \delta$, the equilibrium point $g=r$ is asymptotically stable and admits $\left\{g: c\left(\bar{x}_g,g\right) \geq \delta\right\}$ as a basin of attraction.
\end{enumerate}
\end{definition}
The navigation field characterizes the asymptotic stability while admitting the reference $\{g: c(x,g) \geq \delta\}$ as a basin of attraction.

\begin{theorem} ~\cite{nicotra2018explicit}
\label{thm: erg}
Consider the prestabilized system in~\eqref{eq: linear_sys} and constraint in~\eqref{eq: constr}. Given the initial condition $ x(0), g(0)$ at $t=0$ satisfying $c(x(0), g(0)) > 0$. The update law of the governor in~\eqref{eq: au_ref} has the properties:
\begin{enumerate}
    \item For any piecewise continuous reference signal $r(t) \in \mathbb{R}^p$, constraints in~\eqref{eq: constr} are never violated.
    \item For any constant reference $r$ such that $c\left(\bar x_r,r\right) \geq \delta$, the equilibrium point $\bar x_r$ is asymptotically stable and admits $\left\{(x, g): c\left(\bar x_g,g\right) \geq \delta, \Delta(x, g) \geq 0\right\}$ as a basin of attraction.
\end{enumerate}
\end{theorem}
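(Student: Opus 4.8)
The plan is to prove the two claims separately, deriving both from the structural properties of the dynamic safety margin (Definition~\ref{def: DSM}) and the navigation field (Definition~\ref{def: NV}), since the governor update $\dot g = \Delta(x,g)\rho(r,g)$ couples the two objects. The single invariant that drives the whole argument is that the joint trajectory never leaves the set $\{(x,g):\Delta(x,g)\ge 0\}$; everything else follows from this together with the cited definitions. I would also invoke property~1 of Definition~\ref{def: NV} to guarantee that $\rho$ is bounded on compact sets, so that the coupled dynamics are well posed.

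For the first claim (constraint satisfaction), I would show that $\Delta(x(t),g(t))\ge 0$ for all $t\ge 0$ and then invoke property~2 of Definition~\ref{def: DSM} to conclude $c(x(t),g(t))\ge 0$. The key structural feature is that $\dot g = \Delta(x,g)\rho(r,g)$ forces $\dot g = 0$ whenever $\Delta=0$: the governor freezes exactly on the boundary of the safe set. Starting from the admissible initial condition, where $c(x(0),g(0))>0$ places $(x(0),g(0))$ in $\{\Delta\ge 0\}$, I would argue by contradiction that $\Delta$ cannot cross zero from above. At a putative first instant $\tau$ with $\Delta(x(\tau),g(\tau))=0$, the governor is momentarily stationary, so the true trajectory is locally governed by the frozen-reference dynamics with $g\equiv g(\tau)$; property~3 of Definition~\ref{def: DSM} then guarantees $\Delta$ stays nonnegative along that frozen trajectory, preventing the crossing. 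Because $\Delta$ is only continuous, this step must be made rigorous through a comparison between the true trajectory and the frozen-reference trajectory, using the continuity of $\rho$ (so that $\dot g\to 0$ as $\Delta\to 0$) to control the discrepancy near $\tau$; this is the step I expect to be the main obstacle.

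For the second claim (asymptotic stability and basin of attraction), I would use a cascade / time-scale argument. Fix the constant reference $r$ with $c(\bar x_r,r)\ge\delta$. While $\Delta(x,g)>0$ the governor evolves along the navigation direction $\rho(r,g)$, which by property~3 of Definition~\ref{def: NV} steers $g\to r$ with $\{g: c(\bar x_g,g)\ge\delta\}$ as its basin of attraction; since system~\eqref{eq: linear_sys} is prestabilized, once $g$ settles the state converges to $\bar x_g$, and in the limit $g=r$ yields $x\to\bar x_r$. The remaining concern is that the governor might halt prematurely at a point where $\Delta=0$ but $g\neq r$. This is ruled out by property~4 of Definition~\ref{def: DSM}: as $x$ approaches the frozen equilibrium $\bar x_g$ with $c(\bar x_g,g)\ge\delta$, the margin is bounded below by some $\epsilon>0$, so $\Delta$ becomes strictly positive and the governor resumes its motion toward $r$. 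Combining the forward invariance of $\{\Delta\ge 0\}$ established in the first part with this no-stalling property identifies $\{(x,g): c(\bar x_g,g)\ge\delta,\ \Delta(x,g)\ge 0\}$ as the basin of attraction, completing the argument.
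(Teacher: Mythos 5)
The paper does not actually prove Theorem~\ref{thm: erg}; it states the result and defers entirely to the cited reference. Your argument is the standard one from that reference, and its first half coincides with the sketch the paper later gives for its own Theorem~\ref{thm: erg_stl}: first crossing time $t^*$ with $\Delta(t^*)=0$, hence $\dot g(t^*)=0$ by~\eqref{eq: au_ref}, hence nonnegativity of $\Delta$ going forward by property~3 of Definition~\ref{def: DSM}, contradiction; the second half is the expected combination of property~3 of Definition~\ref{def: NV} with the no-stalling guarantee of property~4 of Definition~\ref{def: DSM}. Two remarks. First, you are right to flag the frozen-versus-true-trajectory step as the real technical obstacle: property~3 of Definition~\ref{def: DSM} only constrains the trajectory with $g$ held fixed at $g(t^*)$, whereas $\dot g$ vanishes only at the single instant $t^*$; the paper's own proof of Theorem~\ref{thm: erg_stl} silently elides exactly this gap, so your version is the more honest of the two. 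Second, your claim that $c(x(0),g(0))>0$ places the initial condition in $\{\Delta\ge 0\}$ does not follow from Definition~\ref{def: DSM}, whose implications run from $\Delta$ to $c$ (property~4 gives the converse only at the equilibrium state $\bar x_g$); this is a defect of the theorem statement as transcribed here (the source assumes $\Delta(x(0),g(0))\ge 0$), but your proof should either assume that directly or justify the inference rather than assert it.
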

The proof is given in~\cite{nicotra2018explicit}. By properly defining the dynamic safety margin and navigation field, the ERG can be used to generate the auxiliary reference to ensure safe tracking for a Prestablized system.

\section{Problem Formulation}
Consider the dynamic system in~\eqref{eq: linear_sys}.
We define an obstacle-free open set $\mathcal{F} \subset \mathbb{R}^p$ and a closed obstacle set $\mathcal{O}:=\mathbb{R}^p \backslash \mathcal{F}$. The safe state set is $\mathcal{F}_x = \{x\mid y=Cx\in\mathcal{F}\}$.
The objective for the system is to comply with an STL formula $\phi_{stl}$ while simultaneously preserving safety
\begin{equation}
\begin{aligned}
& \dot{{x}}={Ax}+{Bu}\\
&y = Cx\\
    &\quad \text{s.t. } y(t) \in \mathcal{F}.\\
    &\quad (y,t) \models \phi_{stl} 
\end{aligned}
\label{eq: stl_obs}
\end{equation}

\textbf{Motivation. }
The relative degree of a differentiable function \(b(x)\) is the number of times it must be differentiated along the dynamics of system~\eqref{eq: nonlinear_sys} until the control input \(u\) explicitly appears in the corresponding derivative. Formally, the relative degree \(r\in \mathbb{Z}^+\) is such that 
$L_{\boldsymbol{g}}L_{\boldsymbol{f}}^{r}b(x) \neq 0$ and $L_{\boldsymbol{g}}L_{\boldsymbol{f}}^{k-1}b(x)=0$ 
for all \(k < r\). Here, \(L_{\boldsymbol{g}}L_{\boldsymbol{f}}\) denotes the Lie derivative notation~\cite{khalil2015nonlinear}.

If the system is a first-order control linear system, meaning its relative degree is one, then the problem can be solved using the CBFs.
Safety and task satisfaction are coded into barrier functions and solved for the control input.
However, if the relative degree of the system is greater than one,
then~\eqref{eq: cbf} is no longer applicable since ${u}$ does not appear in the first Lie derivative of $b({x})$. For example, in the classic adaptive cruise control (ACC) problem~\cite{ames2014control}, the dynamics is
\begin{equation}
    \dot v_e(t) = u(t), \quad \dot d(t) = v_0 - v_e(t),
\end{equation}
where $v_e(t)$ is the velocity of the ego vehicle and $d(t)$ is the distance between the ego vehicle and the preceding vehicle which maintains a constant moving speed $v_0$. Let $x(t) = [d(t), v_e(t)]^\top$,
we construct the barrier function $b(x,t) = d(t) - d_{\delta}$ to ensure that the distance is greater than $d_{\delta}$ for all times.
After applying~\eqref{eq: cbf}, the term $\frac{\partial b(x,t)^\top}{\partial x}g(x)u$ becomes 0. 
Thus, we cannot use CBFs to formulate an optimization control problem.

While higher-order control barrier functions (HOCBFs) can construct the forward invariant set for systems with higher relative-degree systems\cite{xiao2021high}. 
However, they are overly conservative because it takes multiple times for the derivatives to incorporate control input into the safety constraints~\cite{xiao2023barriernet}. 
Thus, it is difficult to apply control barrier functions to systems with a high relative degree and restricted safe sets such as when obstacles are closely spaced. 

In this paper, we consider the problem of controlling a high-order system to satisfy a specification given as an STL formula and remain safe during task completion. 

\begin{problem} Given a feedback-linearlizable system, an STL specification $\phi_{stl}$ in an environment with obstacles. Find the controller such that the trajectory of the system $x(t)$ satisfies~\eqref{eq: stl_obs}.
\label{pb:hosys}
\end{problem}

\section{Solution}
In section~\ref{sec: A}, we first introduce the ERG-guided control barrier functions to solve navigation for STL task satisfaction.
A reference governor is constructed as a first-order system that is directly applied with the first-order CBFs for navigation. The agent as a high-order control system tracks the governor via a stable controller with the safety guarantees
Then in section~\ref{sec: B} we apply differentiable programming to iteratively learn the control parameters and improve the performance by reducing the STL task completion time.



\subsection{ERG-guided CBF}
\label{sec: A}

Consider the dynamic feedback-linearizable system with a high relative degree in~\eqref{eq: linear_sys}. The output of the system $y = Cx$ is set to track a reference governor $g(t) \in \mathbb{R}^p$. 
The system admits the controller ${u} = {K(x - \bar x_g)}$ such that the closed-loop system

\begin{equation}
\label{eq: close_sys}
\begin{aligned}
    \dot{{x}}&={Ax}+{BK({x} - \bar x_g)}\\
    y &= Cx,
\end{aligned}
\end{equation}
is stable for the equilibrium at the point $\bar x_g$ if the matrix $({A+BK})$ is Hurwitz\cite{francis1977linear}.

The dynamic of the reference governor is constructed as a simple first-order linear system:
\begin{equation}
\label{eq: rg}
    \dot g = u_g
\end{equation}
where $u_g$ is the control input for the reference governor.

For controllable ${(A , B)}$,
the energy of the system in~\eqref{eq: linear_sys} is 
\begin{equation}
    V({x}, \bar x_{g})=({x}-\bar x_{g})^{\top} {P}({x}-\bar x_{g})
    \label{eq :ly}
\end{equation}
where ${P}$ is the unique solution of the Lyapunov equation  ${(A+B K})^{\top} {P}+{P}({A}+{B K})=-{Q}$ for any positive-definite symmetric matrix ${Q}$. Denote the energy function as $\left\|x - \bar x_g\right\|_P^2$.

\begin{lemma}
\label{lemma: l}
Consider the output ${y = Cx}$. The value of the Lyapunov function in~\eqref{eq :ly} is such that
\begin{equation}
\label{eq: lemma1}
    \left\|{C} {x}-{g}\right\|^2 \leq l^2\left\|x - \bar x_g\right\|_P^2
\end{equation}
where $l = \lambda_{max} (L^{-1} C^\top CL^{-\top} )$, and $L$ is the square root of positive-definite matrix $P$, i.e., $P = LL^\top$.
\begin{proof}
The proof follows from the bound of the Rayleigh
quotient $R(A,z) = \frac{z^\top Az}{z^\top z} \leq \lambda_{max}(A)$.
Let $z = L^\top({x}_1 - {x}_2)$ and $A = L^{-1}C^\top C L^{-\top}$. Then the inequality can be rewritten as:
\begin{equation*}
z^\top A z \leq \lambda_{max}(A) z^\top z,
\end{equation*}
which leads to
\begin{equation*}
({x}_1 - {x}_2)^\top L A L^\top ({x}_1 - {x}_2) \leq \lambda_{\text{max}}(A) ({x}_1 - {x}_2)^\top P ({x}_1 - {x}_2).
\end{equation*}

or equivalently,
\begin{equation*}
\|C({x}_1 - {x}_2)\|^2 \leq \lambda_{\text{max}}(A) \|{x}_1 - {x}_2\|_P^2.
\end{equation*}
\end{proof}
\end{lemma}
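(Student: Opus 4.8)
The plan is to collapse the weighted inequality into a single Rayleigh-quotient estimate after two reductions. First I would use the defining property of the desired state $\bar x_g$ associated with the reference $g$, namely that its output matches the reference, $C\bar x_g = g$. With this identity the left-hand side becomes $\|Cx - g\|^2 = \|C(x - \bar x_g)\|^2 = (x - \bar x_g)^\top C^\top C (x - \bar x_g)$, a quadratic form in the single error vector $e := x - \bar x_g$. The claim is then the matrix inequality $e^\top C^\top C e \le l^2\, e^\top P e$ for all $e \in \mathbb{R}^n$, with $P = LL^\top$.

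Second, I would whiten the $P$-weighted norm via the substitution $z = L^\top e$, equivalently $e = L^{-\top} z$. Then the $P$-norm reduces to the Euclidean norm, $e^\top P e = z^\top L^{-1} L L^\top L^{-\top} z = z^\top z$, while the output term becomes $e^\top C^\top C e = z^\top (L^{-1} C^\top C L^{-\top}) z = z^\top A z$ with $A := L^{-1} C^\top C L^{-\top}$. Because $A^\top = A$, the matrix $A$ is symmetric (and positive semidefinite), so the Rayleigh-quotient bound $z^\top A z \le \lambda_{\max}(A)\, z^\top z$ applies directly; translating back through $z = L^\top e$ gives $\|C(x - \bar x_g)\|^2 \le \lambda_{\max}(A)\,\|x - \bar x_g\|_P^2$, which is the stated bound with constant $\lambda_{\max}(A)$.

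I expect the genuine content to lie entirely in the first reduction --- the identification $C\bar x_g = g$ --- rather than in any of the subsequent algebra. Everything after that point is the standard fact that the largest eigenvalue of a symmetric matrix bounds its Rayleigh quotient, applied in coordinates that diagonalize the metric induced by $P$. I would therefore state $C\bar x_g = g$ explicitly at the outset and justify it from the construction of $\bar x_g$ as the steady state tracking $g$.

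Finally, I would flag a minor notational point to keep the statement self-consistent: the constant produced by this argument is exactly $\lambda_{\max}(A)$, so either the gain should be defined as $l = \sqrt{\lambda_{\max}(L^{-1} C^\top C L^{-\top})}$ to match the $l^2$ in the displayed inequality, or the $l^2$ should be read as the single factor $\lambda_{\max}(L^{-1}C^\top C L^{-\top})$. This does not affect the argument, but it is worth reconciling so that $l$ plays the role of the induced gain from the $P$-norm on the state to the Euclidean norm on the output.
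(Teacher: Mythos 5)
Your proof follows essentially the same route as the paper's: the substitution $z = L^\top(x - \bar x_g)$ and the Rayleigh-quotient bound applied to $A = L^{-1}C^\top C L^{-\top}$. Your two additions --- making the identity $C\bar x_g = g$ explicit (the paper silently passes from $\|C(x_1-x_2)\|^2$ to the stated $\|Cx-g\|^2$), and observing that the constant should be $l = \sqrt{\lambda_{\max}(A)}$ for the displayed $l^2$ to be consistent --- both correctly address points the paper leaves implicit or states inconsistently.
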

Define the distance between the governor reference ${g}$ to the nearest obstacle as $d_s(g,\mathcal{O}) \in \mathbb{R}$. 

\begin{prop}
\label{prop: lyp}For a fixed $g\in \mathcal{F}$, 
    $\Delta(x,g) = d^2_s({g},\mathcal{O}) - l^2V(x,g)$ is a barrier function, where $V(x,g)$ is the Lyapunov function in~\eqref{eq :ly}, and $l$ is defined in Lemma~\ref{lemma: l}. The set $\{x\mid \Delta(x,g)\geq 0\}$ is positively forward invariant, the output $y(t)$ converges to $g$ asymptotically and $y(t) \in \mathcal{F}$.
\end{prop}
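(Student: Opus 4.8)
The plan is to fix the reference $g \in \mathcal{F}$ and exploit the fact that, with $g$ held constant, $d_s^2(g,\mathcal{O})$ is a constant, so $\Delta(x,g)$ differs from $-l^2 V(x,g)$ only by this constant. Hence the monotonicity of $\Delta$ along closed-loop trajectories is inherited directly from the Lyapunov decrease of $V$. I would organize the argument into three claims: (i) forward invariance of $\{x : \Delta(x,g)\ge 0\}$, (ii) safety $y(t)\in\mathcal{F}$, and (iii) asymptotic convergence $y(t)\to g$.

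For (i), write $e = x - \bar x_g$, so that along~\eqref{eq: close_sys} the error obeys $\dot e = (A+BK)e$. Differentiating $V(x,g) = e^\top P e$ and substituting the Lyapunov equation $(A+BK)^\top P + P(A+BK) = -Q$ gives $\dot V = -e^\top Q e \le 0$. Since $g$ is fixed, $\dot\Delta = -l^2\dot V = l^2 e^\top Q e \ge 0$, so $\Delta$ is non-decreasing along trajectories. Therefore $\Delta(x(0),g)\ge 0$ implies $\Delta(x(t),g)\ge \Delta(x(0),g)\ge 0$ for all $t\ge 0$, which is exactly forward invariance of the zero-superlevel set and identifies $\Delta$ as a barrier function.

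For (ii), I would chain the invariance with Lemma~\ref{lemma: l}. Whenever $\Delta(x,g)\ge 0$ we have $d_s^2(g,\mathcal{O})\ge l^2 V(x,g)\ge \|Cx-g\|^2 = \|y-g\|^2$, hence $\|y-g\|\le d_s(g,\mathcal{O})$. Since $d_s(g,\mathcal{O})$ is the distance from $g$ to the closed obstacle set $\mathcal{O}$, the ball of that radius centered at $g$ does not penetrate $\mathcal{O}$, so $y$ remains in $\mathcal{F}$ (strictly so when $\Delta>0$). For (iii), because $Q\succ 0$ we have $\dot V<0$ for $e\ne 0$ with $V$ positive definite, so the error origin is asymptotically stable by Lyapunov's theorem; thus $x\to\bar x_g$ and $y = Cx\to C\bar x_g = g$.

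The step I expect to require the most care is the geometric safety conclusion in (ii): translating the norm bound $\|y-g\|\le d_s(g,\mathcal{O})$ into membership in the open free set $\mathcal{F}$. This hinges on $d_s$ being exactly the Euclidean distance to the nearest point of $\mathcal{O}$, so that the open ball of radius $d_s(g,\mathcal{O})$ about $g$ lies in $\mathcal{F}$, and on distinguishing the strict case $\Delta>0$ (yielding $y\in\mathcal{F}$) from the boundary case $\Delta=0$ (yielding $y$ in the closure), matching properties 1--2 of Definition~\ref{def: DSM}. A secondary point worth stating explicitly is the reduction of~\eqref{eq: close_sys} to the error dynamics $\dot e=(A+BK)e$, i.e.\ that $\bar x_g$ is the closed-loop equilibrium associated with $g$ and satisfies $C\bar x_g = g$; the remaining manipulations are routine.
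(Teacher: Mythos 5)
Your proposal follows essentially the same route as the paper: since $g$ is fixed, $\dot\Delta = -l^2\dot V \ge 0$ along the closed-loop error dynamics gives forward invariance of $\{x\mid\Delta(x,g)\ge 0\}$, Lemma~\ref{lemma: l} converts $\Delta\ge 0$ into the output bound $\|y-g\|\le d_s(g,\mathcal{O})$ for safety, and Hurwitz stability of $A+BK$ gives convergence. If anything, your version is more careful than the paper's own two-line argument, which asserts the strict inequality $\dot\Delta>0$ (false at $x=\bar x_g$) and defers the $y(t)\in\mathcal{F}$ step to the proof of Lemma~\ref{lemma: dsm}.
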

\begin{proof}
    If the initial value $\Delta(x(0), g) >0$ and since $V(x,g)$ is a Lyapunov function, the time derivative of $\Delta(x,g) = \frac{\partial \Delta(x,g)}{\partial x}\dot x = -l^2\frac{\partial V(x,g)}{\partial x}\dot x > 0$. Hence, the set $\{x\mid \Delta(x,g)>0\}$ is forward invariant. The controller in~\eqref{eq: close_sys} guarantees the convergence of output tracking.
\end{proof}

\begin{lemma}
    $\Delta(x,g)$ is a valid dynamic safety margin for the closed-loop system in~\eqref{eq: close_sys}.
    \label{lemma: dsm}
\end{lemma}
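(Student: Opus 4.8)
The plan is to verify directly the four defining properties of a dynamic safety margin in Definition~\ref{def: DSM}, taking the constraint function to be $c(x,g) = d_s(Cx,\mathcal{O})$, which is nonnegative exactly when the output $y=Cx$ lies in the free set $\mathcal{F}$. The conceptual bridge I would use is that the distance-to-obstacle map $p \mapsto d_s(p,\mathcal{O})$ is $1$-Lipschitz, so that a bound on $\|Cx-g\|$ transfers directly into a lower bound on $d_s(Cx,\mathcal{O})$; Lemma~\ref{lemma: l} is precisely what supplies that bound from the Lyapunov sublevel condition.

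First I would establish the core pointwise inequality. Suppose $\Delta(x,g)\geq 0$, i.e. $l^2 V(x,g)\leq d_s^2(g,\mathcal{O})$. Applying Lemma~\ref{lemma: l} with $\bar x_g$ as the reference state gives $\|Cx-g\|^2 \leq l^2 V(x,g) \leq d_s^2(g,\mathcal{O})$, hence $\|Cx-g\|\leq d_s(g,\mathcal{O})$. The $1$-Lipschitz property of the distance function then yields
\[
 c(x,g) = d_s(Cx,\mathcal{O}) \;\geq\; d_s(g,\mathcal{O}) - \|Cx-g\| \;\geq\; 0 .
\]
The same chain with strict inequalities shows $\Delta(x,g)>0 \Rightarrow c(x,g)>0$. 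This gives the instantaneous versions of properties~(1) and~(2).

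Next I would promote these pointwise statements to statements along the whole trajectory with $g$ held fixed, which is exactly what Definition~\ref{def: DSM} requires. Here I invoke Proposition~\ref{prop: lyp}: the set $\{x : \Delta(x,g)\geq 0\}$ is positively forward invariant for the closed-loop system~\eqref{eq: close_sys}. Thus $\Delta(x(0),g)\geq 0$ implies $\Delta(x(t),g)\geq 0$ for all $t\geq 0$, and applying the core inequality at each time $t$ gives $c(x(t),g)\geq 0$ for all $t$, establishing property~(2); the strict version gives property~(1), and forward invariance of the sublevel set is itself property~(3).

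Finally, property~(4) is a short computation at the desired state. Since $\bar x_g$ is the equilibrium associated with $g$, we have $C\bar x_g = g$ and $V(\bar x_g,g)=0$, so $\Delta(\bar x_g,g)=d_s^2(g,\mathcal{O})$ and $c(\bar x_g,g)=d_s(g,\mathcal{O})$. Hence $c(\bar x_g,g)\geq\delta$ forces $\Delta(\bar x_g,g)\geq\delta^2$, and choosing $\epsilon=\delta^2$ closes the argument. The main obstacle I anticipate is not any single computation but making the geometric link rigorous: one must carefully justify that the Lyapunov sublevel bound from Lemma~\ref{lemma: l} combined with the Lipschitz distance estimate genuinely certifies $Cx\in\mathcal{F}$, and that Proposition~\ref{prop: lyp}'s forward invariance is applied with $g$ frozen (as the DSM definition stipulates) rather than with $g$ evolving under~\eqref{eq: au_ref}.
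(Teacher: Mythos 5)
Your proposal is correct and follows essentially the same route as the paper: the identical chain $\Delta(x,g)\geq 0 \Rightarrow d_s(g,\mathcal{O})\geq\|Cx-g\|$ via Lemma~\ref{lemma: l}, combined with the forward invariance from Proposition~\ref{prop: lyp}. The paper states this core inequality and then defers the verification of the four conditions of Definition~\ref{def: DSM} to the cited reference, whereas you fill in those details explicitly (the $1$-Lipschitz distance estimate and the check of property~(4) at $\bar x_g$), which is a faithful expansion rather than a different argument.
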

\begin{proof}
Consider a positive dynamic safety margin. We have
\begin{equation}
\begin{aligned}
\Delta({x}, g) \geq 0 & \Longrightarrow d_s^2(g, \mathcal{O}) \geq l^2\|{x}-{\bar x_g}\|_{P}^2 \\
& \Longrightarrow d_s({g}, \mathcal{O}) \geq\|{C} {x}-{g}\|
\end{aligned}
\end{equation}
Thus $g\in\mathcal{F}$ implies $y = Cx \in cl(\mathcal{F})$. Using Prop.~\ref{prop: lyp}, the four conditions in Def.~\ref{def: DSM} can be proved; see~\cite{li2023governor} for details.
\end{proof}

The dynamic safety margin $\Delta(x,g)$ specifies how safe the governor's location is.
The navigation field is used as a direction change for the governor's state.
One way is to construct artificial potential fields that are designed to satisfy Def.~\ref{def: NV}. However, artificial potential fields are known to have some limitations such as the inability to pass between closely spaced obstacles, oscillation between obstacles, and getting stuck in local minima~\cite{koren1991potential}.
This paper focuses on satisfying the STL specification that can be leveraged to formulate the navigation field as the objective of ERG in Thm.~\ref{thm: erg} especially,


\begin{definition}
\label{def: NV2}
A function $\rho(g): \mathbb{R}^p  \rightarrow \mathbb{R}^p$ is a navigation field if for any $g(0)$ satisfying the constraints~\eqref{eq: constr}, the system $\dot g = \rho(g)$ is such that
\begin{enumerate}
    \item $\sup _{(g) \in H}\|\rho(g)\|$ is finite for each compact set $H$.
    \item For any continuous reference $g(t) \in \mathbb{R}^p$, the resulting $g(t)$ satisfies $c\left(\bar{x}_g ,g(t)\right) \geq \delta$.
\end{enumerate}    
\end{definition}

The reference governor defined in~\eqref{eq: rg} is a first-order linear system. Therefore the obstacle navigation for the governor can be solved by constructing the control barrier functions as a quadratic programming problem:
\begin{subequations}
\label{eq: cbf2}
\begin{align}
    &\text{min }  \qquad u_g^\top H u_g \label{a}\\
      &\text{s.t.}\:  \frac{\partial {b}_{obs}(g)}{\partial g}u_g \geq-\alpha({b}_{obs}(g)) \label{b} \\
    & \frac{\partial {b}_{stl}(g, t)^T}{\partial g}\Delta(t)u_g+\frac{\partial {b}_{stl}(g, t)}{\partial t} \geq-\alpha({b}_{stl}(g, t)) \label{c}\notag\\
\end{align}
\end{subequations}

where $H\in \mathbb{R}^{m\times m}$ is a positive semi-definite matrix, ${b}_{stl}$ and ${b}_{obs}$ are the corresponding control barrier functions for the STL formula~\cite{lindemann2018control} and obstacle avoidance, $\alpha$ is a class K function and $\Delta(t)$ is the value of DSM at time $t$. 

\begin{prop}\label{prop: nv}
    The controller in~\eqref{a} is a valid navigation field for Def.~\ref{def: NV2}. 
\end{prop}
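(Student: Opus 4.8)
The plan is to verify the two requirements of Def.~\ref{def: NV2} for the field $\rho(g) = u_g^\star(g,t)$, where $u_g^\star$ denotes the minimizer of the quadratic program~\eqref{eq: cbf2}; I take $H \succ 0$ so that the objective~\eqref{a} is strictly convex and the minimizer is unique whenever the feasible set is nonempty.

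First I would settle condition~2, the substantive safety claim, using only the obstacle constraint~\eqref{b}. The barrier $b_{obs}$ is constructed so that its zero-superlevel set $\{g : b_{obs}(g) \geq 0\}$ coincides with $\{g : c(\bar x_g, g) \geq \delta\}$; this is consistent because at the governor equilibrium $C\bar x_g = g$, so the margin condition reduces to $d_s(g,\mathcal{O}) \geq \delta$. Along any trajectory of $\dot g = \rho(g)$, feasibility of~\eqref{b} gives $\dot b_{obs}(g(t)) = \frac{\partial b_{obs}(g)}{\partial g}u_g \geq -\alpha(b_{obs}(g(t)))$, and since $\alpha$ is class~$K$ the comparison lemma yields $b_{obs}(g(t)) \geq 0$ for all $t \geq 0$ whenever $b_{obs}(g(0)) \geq 0$. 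As $g(0)$ satisfies~\eqref{eq: constr}, this forward invariance is exactly $c(\bar x_g, g(t)) \geq \delta$ for all $t$, establishing condition~2.

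Next I would prove condition~1, boundedness on compact sets. Because $u_g^\star$ minimizes $\|u_g\|_H^2$ over the feasible set, it is the minimum-$H$-norm feasible point, so $\|\rho(g)\|_H \leq \|\hat u\|_H$ for any feasible $\hat u$; it therefore suffices to produce a uniformly bounded feasible control on a compact $\mathcal{K} \subset \mathbb{R}^p$. On $\mathcal{K}$ the constraint data---the gradients $\frac{\partial b_{obs}}{\partial g}$, $\frac{\partial b_{stl}}{\partial g}$, the scalar $\Delta(t)$, and the right-hand sides $\alpha(b_{obs})$ and $\alpha(b_{stl}) + \frac{\partial b_{stl}}{\partial t}$---are continuous, hence bounded. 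Under a constraint qualification (linear independence of the active constraint gradients, which for a distance-type $b_{obs}$ amounts to $\frac{\partial b_{obs}}{\partial g}$ not vanishing near $\partial\mathcal{O}$), one can solve the active constraints for a feasible $\hat u$ whose norm is controlled by these bounded quantities, so $\sup_{g \in \mathcal{K}}\|\rho(g)\| < \infty$; equivalently, the CBF-QP map is locally Lipschitz and thus bounded on compacts.

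The step I expect to be the main obstacle is guaranteeing the QP is well posed, i.e.\ that the hard obstacle constraint~\eqref{b} and the STL constraint~\eqref{c} are not mutually infeasible---otherwise $u_g^\star$ is undefined and both conditions vacuously fail. The remedy I would adopt is to relax~\eqref{c} to a soft constraint with a penalized slack variable, so that the feasible set is governed by~\eqref{b} alone; feasibility on $\mathcal{F}$ is then automatic (a control satisfying~\eqref{b} always exists) and the invariance and boundedness arguments carry over verbatim. A secondary difficulty is uniform boundedness of $\rho$ as $g \to \partial\mathcal{O}$, where $b_{obs} \to 0$ forces the constraint active; here I rely on $\frac{\partial b_{obs}}{\partial g}$ being bounded away from zero for the distance barrier, which keeps the minimum-norm solution bounded and closes condition~1.
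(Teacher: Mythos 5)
Your proof is correct and follows the same route as the paper's: condition~2 is obtained from forward invariance of the zero-superlevel set of $b_{obs}$ under the CBF constraint~\eqref{b}, and condition~1 from boundedness of the QP minimizer. The paper's own proof is a one-liner that asserts $u_g$ ``can be directly bounded through optimization constraints'' (i.e., by imposing input bounds in the QP) and simply conditions everything on feasibility of~\eqref{b}; your minimum-norm argument for boundedness and your explicit handling of possible mutual infeasibility of the obstacle and STL constraints (via a slacked~\eqref{c}) are more careful than what the paper provides, but they do not change the underlying approach.
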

\begin{proof}
The control $u_g$ can be directly bounded through optimization constraints. If $g(0) \in \mathcal{F}$ and~\eqref{b} is feasible, then $\{g\mid {b}_{obs}(g) \geq 0\}$ is a forward invariant set which means $g(t) \in \mathcal{F}$. Since $\bar x_g $ is the equilibrium point from $g(t)$ to the space of $x(t)$, then $g(t) \in \mathcal{F}$ iff $\bar x_g(t) \in \mathcal{F}_x$ satisfies Def.~\ref{def: NV2}.
\end{proof}


\begin{theorem} 
\label{thm: erg_stl}
    Consider the prestabilized system in~\eqref{eq: close_sys} and constraints in~\eqref{eq: constr} using the navigation field and the dynamic safety margin in the Lemma~\ref{lemma: dsm} and Prop.~\ref{prop: nv}. Given the initial condition $x(0), g(0)$ such that $c(x(0), g(0)) > 0$, the controller
    \begin{equation}
    \dot g = \Delta(x,g)u_g,
\label{eq: au_ref2}
\end{equation}
satisfies constraints~\eqref{eq: constr} at all times for any piecewise continuous reference signal $g(t) \in \mathbb{R}^p$.

Then. the governor trajectory $g(t)$ is guaranteed to satisfy the STL formula, and the system output trajectory $y$ converges to $g$ and $x$ converges to $\bar x_g$.

\end{theorem}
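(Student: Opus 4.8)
The plan is to recognize Theorem~\ref{thm: erg_stl} as an instance of the master ERG result, Theorem~\ref{thm: erg}, specialized to the navigation field produced by the STL quadratic program~\eqref{eq: cbf2}, and then to layer on top of it the two conclusions specific to this setting: STL satisfaction of the governor signal and convergence of the plant. First I would verify that the two structural hypotheses of Theorem~\ref{thm: erg} hold. Lemma~\ref{lemma: dsm} already certifies that $\Delta(x,g)$ is a valid dynamic safety margin for the closed loop~\eqref{eq: close_sys}, and Proposition~\ref{prop: nv} certifies that the minimizer $u_g$ of~\eqref{eq: cbf2} is a valid navigation field in the sense of Definition~\ref{def: NV2}. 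With both ingredients validated, the realized update law~\eqref{eq: au_ref2}, $\dot g = \Delta(x,g)\,u_g$, is exactly the ERG law~\eqref{eq: au_ref}, so the first property of Theorem~\ref{thm: erg} applies verbatim and yields that the constraint~\eqref{eq: constr} is never violated for any piecewise continuous reference; since $c(x,g)\ge 0$ encodes $y=Cx\in\mathcal{F}$, this is precisely the safety claim.

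Next I would establish STL satisfaction of the governor trajectory. The key observation is that constraint~\eqref{c} is nothing but the time-varying CBF inequality~\eqref{eq: cbf} written for $b_{stl}(g,t)$ along the \emph{actual} governor dynamics: because the realized update is $\dot g = \Delta(t)\,u_g$, the total derivative $\frac{d}{dt}b_{stl}(g(t),t)=\frac{\partial b_{stl}(g,t)^\top}{\partial g}\Delta(t)u_g+\frac{\partial b_{stl}(g,t)}{\partial t}$ is lower bounded by $-\alpha(b_{stl}(g(t),t))$ exactly when~\eqref{c} holds. Assuming feasibility of the program so that such a $u_g$ exists at each $t$, the forward-invariance property of time-varying CBFs recalled after~\eqref{eq: cbf} gives $b_{stl}(g(0),0)>0 \Rightarrow b_{stl}(g(t),t)\ge 0$ for all $t$. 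By the CBF-from-STL construction of~\cite{lindemann2018control}, together with the conjunction encoding~\eqref{eq: cbf_and}, nonnegativity of $b_{stl}$ along $g(t)$ certifies $(g,t)\models \phi_{stl}$.

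Finally I would argue convergence. From $\Delta(x,g)\ge 0$ and Lemma~\ref{lemma: dsm} we obtain the tracking bound $\|Cx-g\|\le d_s(g,\mathcal{O})$, so the output remains inside $\mathcal{F}$ while tracking $g$; and because $A+BK$ is Hurwitz, the Lyapunov function $V(x,\bar x_g)$ in~\eqref{eq :ly} contracts toward the moving target $\bar x_g$ by Proposition~\ref{prop: lyp}. The governor is slowed by the factor $\Delta$, which stays bounded away from zero as long as $g$ is interior to $\mathcal{F}$ and $V$ is small, so a timescale-separation argument (as in part~2 of Theorem~\ref{thm: erg} and~\cite{nicotra2018explicit, li2023governor}) shows the plant keeps up. Once the STL obligations are met, constraint~\eqref{c} becomes slack, the optimizer returns $u_g\to 0$, $g$ settles at a constant $\bar g$, and Proposition~\ref{prop: lyp} then forces $V\to 0$, i.e. $x\to\bar x_{\bar g}$ and $y=Cx\to\bar g$.

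The main obstacle I expect is the convergence step rather than safety or STL satisfaction: safety is a direct quotation of Theorem~\ref{thm: erg} and STL satisfaction is a clean consequence of the CBF forward-invariance property, but convergence couples the slow governor dynamics with the fast plant dynamics in a genuinely time-varying, constrained way. Making it rigorous requires (i) a standing feasibility assumption (or proof) that~\eqref{eq: cbf2} admits a solution $u_g$ for all $t$, so the navigation field is well defined, and (ii) a careful cascade/ISS argument showing that the vanishing of $u_g$ upon task completion actually drives $g$ to a limit and hence $\Delta$ to a strictly positive value, permitting $V\to 0$. I would close this by reducing the completed-task regime to the constant-reference case already handled by the asymptotic-stability machinery of Theorem~\ref{thm: erg}.
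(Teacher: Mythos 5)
Your proposal follows essentially the same route as the paper's proof: safety via the ERG invariance argument (the paper re-derives inline the first-zero-crossing contradiction --- $\Delta(t^*)=0$ freezes $\dot g$ and condition 3 of Def.~\ref{def: DSM} keeps $\Delta$ nonnegative --- which you instead obtain by citing property 1 of Thm.~\ref{thm: erg} after validating the DSM and navigation field), STL satisfaction via forward invariance under constraint~\eqref{c}, and convergence via the Lyapunov function of Prop.~\ref{prop: lyp}. If anything you are more careful than the paper, which simply writes that convergence ``is proved in Prop.~\ref{prop: lyp}'' (a statement for fixed $g$) and never states the feasibility assumption on~\eqref{eq: cbf2} that you correctly identify as necessary.
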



\begin{proof}
The proof is based on the proof for Thm.~\ref{thm: erg}~\cite{nicotra2018explicit}.
    Since $\dot g$ is finite, $g(t)$ exists and is continuous ~\cite{filippov2013differential}.
    Likewise, since system~\eqref{eq: linear_sys} is Lipschitz, the signal of $x(t)$ is also continuous.
    If the initial condition satisfies the constraints Def.~\ref{def: DSM}, $\Delta(0) > 0$. 
    From continuity, we have that if there exists a time $t$ such that $\Delta(t) < 0$, there must be a time $t^* < t$ such that $\Delta(t^*)=0$. 
    However, since $\Delta(t^*)=0$ implies $\dot g(t^*)=0$ from~\eqref{eq: au_ref}. Therefore, since $\Delta$ is a valid DSM, by Def.\ref{def: DSM}, $\Delta(t^* + T)$ is nonnegative for $T \geq 0$, which leads to a contradiction to the $\Delta(t) < 0$. Thus,~\eqref{eq: constr} is satisfied.
    
     The STL satisfaction for the governor can be guaranteed by using the STL-CBF constraints in~\eqref{c}. The convergence property is proved in Prop.~\ref{prop: lyp}.
\end{proof}

When the governor is close to the obstacle, the DSM is smaller, which slows down the update. Therefore, we add a distance term to the objective function~\eqref{a}
    \begin{equation}
    \label{eq: obj_u}
        \text{min } u_g^\top H u_g + d^\top Qd
    \end{equation}
    where $d =d(x(t,u_g),\mathcal{O})$ and $Q$ is a negative definite matrix such that the governor maintains a feasible distance from obstacles while still satisfying the constraints.

\begin{remark}
    The barrier function for obstacle avoidance $b_{obs}$ in~\eqref{eq: cbf2} can be also coded as part of the STL formula using conjunction as in~\eqref{eq: cbf_and}. However, the construction of CBFs for the STL formula tends to be conservative and involves handpicked parameters and structure. To demonstrate the effectiveness of the reference governor approach, we use independent constraints for obstacle avoidance in this paper. 
\end{remark}

\subsection{Iterative Tuning}
\label{sec: B}
The key component of satisfying STL specifications is the performance of tracking controllers of the agents in~\eqref{eq: close_sys}. To improve it, we employ differentiable programming and iteratively improve the control parameters.

Let the task completion times for the governor and the agent be denoted as $t_g$ and $t_a$, respectively. Thm.~\ref{thm: erg_stl} ensures the safe tracking of the governor $g(t)$ by the agent $y(t)$. Additionally, it guarantees that the governor complies with the STL formula for $g(t)$ and the agent trajectory $x(t)$ eventually converges to $\bar x_g(t)$. However, the exact point in time, $t_a$, when the agent complies with the STL formula, is not necessarily restricted within the time windows defined for the STL specifications. This is largely dependent on the parameters of the controller. 

In Prop.~\ref{prop: lyp}, the DSM is constructed based on a heuristic feedback controller to stabilize the system. Here, we apply auto-differential iterative tuning to improve the performance of the parameters in the feedback controller, thus minimizing the tracking time and, as a result, decreasing $t_a$.

Iterative tuning methods involve iteratively updating the parameters for evaluations to improve performance based on a loss function (e.g., tracking error) often using gradient-based approaches~\cite{berkenkamp2016safe,cheng2023difftune}.

In our settings, we apply a model-based tuning method called \emph{DiffTune}~\cite{cheng2023difftune} which uses the sensitivity equation to propagate the gradient. 
Denote the parameters of the closed-loop controller as $\boldsymbol{\theta}$.
The loss is the tracking error between the agent and the governor over the task completion time $t_a$:
\begin{equation}
\label{eq: loss}
        \mathcal{L}: \sum_{t=0}^{t_a
}(Cx(t) - g(t)),
\end{equation}
The parameters $\boldsymbol{\theta}$ are updated as:
\begin{equation}
\boldsymbol{\theta} \leftarrow \boldsymbol{\theta}-\alpha \nabla_{\boldsymbol{\theta}} \mathcal{L},
\end{equation}
where $\alpha$ is the step size and $\nabla_{\boldsymbol{\theta}} \mathcal{L}$ is the gradient from the sensitively function~\cite{khalil2015nonlinear}. Thus, $\boldsymbol{\theta}$ is iteratively updated to decrease the loss and minimize the total tracking time.

\section{Simulation Results}
In this section, we assess the performance of the ERG-guided CBF for high-order systems with STL specifications. We show two case studies for our evaluation. The first case uses a double integrator model. The second case uses the quadrotor model showing the application for the feedback-linearizable system.
\vspace{-5pt}
\subsection{Double integrator model}
\subsubsection{Specifcations}

Consider the agent dynamics as a double integrator. The reference governor is a first-order governor system. Denote $x, g \in \mathbb{R}^2$ as the positions of the agent and governor in a 2D environment: 
\vspace{-5pt}
\begin{equation}
\begin{aligned}
        &\dot x = v_a, \quad  \dot v_a = u_a\\
        & \dot g = u_g,
\end{aligned}
\end{equation}
where $v_a \in \mathbb{R}^2$ is the velocity of the agent. Denote $\boldsymbol{x} = [x, \dot x]^\top$. The controller $u_a = K\boldsymbol{x}$ is a feedback controller in $\mathbb{R}^2$, where
$K = \begin{bmatrix}
    k_p& k_p& 0&0\\
    0&0&k_d&k_d
\end{bmatrix}$. The output $y = C\boldsymbol{x}$ is set to extract the position of the agent, i.e., $C = \begin{bmatrix}
    1 & 0 & 0 & 0\\
    0 & 1 & 0 & 0
\end{bmatrix}$. The matrix $K$ is initialised with $k_p = -6, k_d = -4$. The locations of the agent and governor are initialized at the origin with zero velocity.
The simulation frequency is 100 Hz, and all optimizations are solved using Gurobi~\cite{gurobi}.

The STL formula specification is 
\vspace{-5pt}
\begin{equation*}
    \levent_{[5, 30]}Reach_1 \land \levent_{[30, 80]}Reach_2 \land \lalways_{[0,80]}Stay_3,
\end{equation*}
where $Reach_i$ means the agent needs to reach a circle area {$\mathcal{R}_i$}, i.e. $\|x - \boldsymbol{o}_i\| < \boldsymbol{r}_i$, where $\boldsymbol{o}_i$ and $\boldsymbol{r}_i$ are the center and radius of area {$\mathcal{R}_i$}.
$Stay_i$ means the agent needs to stay within the circle arena area {$\mathcal{R}_3$}.
The large time windows for the subtasks are chosen to ensure the STL feasibility in the auto-tuning under different control parameters.

\subsubsection{Performance}
Fig.~\ref{fig: results_traj} shows the comparison of ERG-guided CBF and HOCBF~\cite{xiao2021high}. Obstacle locations are closely spaced within the gray arena area to create narrow passages. As shown in Fig.~\ref{fig: erg}, under the ERG-guided CBF, the agent successfully tracks the governor, completes the STL specifications and ensures safety. In contrast, the implementation of HOCBF, depicted in Fig.~\ref{fig: hocbf}, demonstrates the agent fails to reach the target. The narrow passages between obstacles prevent the successful completion of the specification thus highlighting the advantages of ERG-guided CBFs.

\begin{figure}[hbt!]
    \centering
    \subfloat[\label{fig: erg}]{%
     \includegraphics[width=0.45\linewidth]{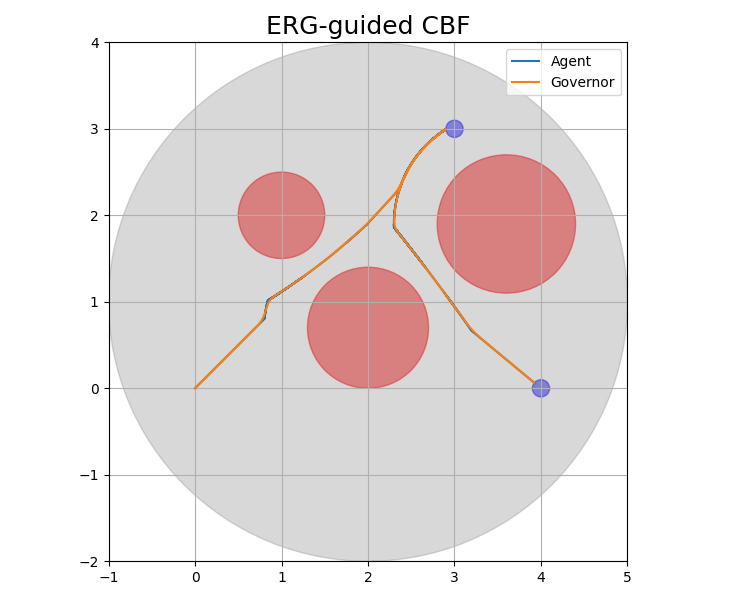}}
    \centering
    \subfloat[\label{fig: hocbf}]{%
    \includegraphics[width=0.45\linewidth]{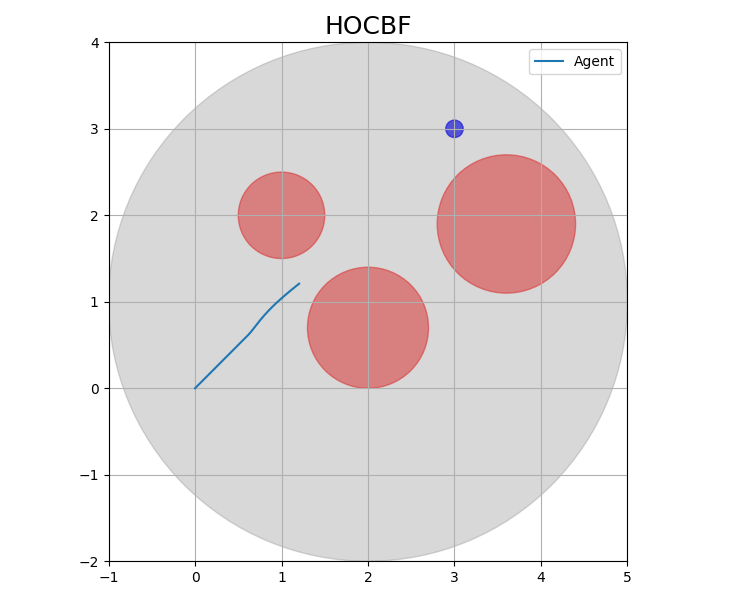}
       }
\caption{Environment: red circles are obstacles, blue circles are target areas, and grey circles are arena areas. (a) trajectories using ERG-guided CBF (b) agent trajectory using HOCBF}
     \label{fig: results_traj}
\end{figure}
\vspace{-20pt}
\begin{figure}[hbt!]
    \centering
    \begin{subfigure}[b]{0.2\textwidth}
    \includegraphics[width=\textwidth]{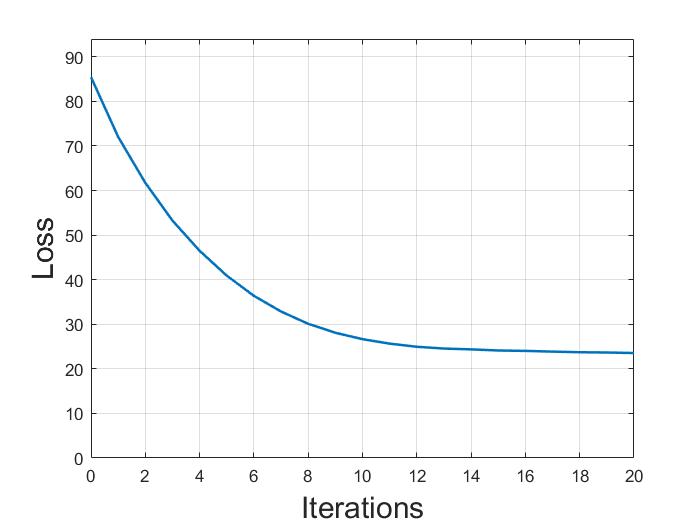}
        \caption{Loss}
    \label{fig: loss}
    \end{subfigure}
    \begin{subfigure}[b]{0.2\textwidth}
        \includegraphics[width=\textwidth]{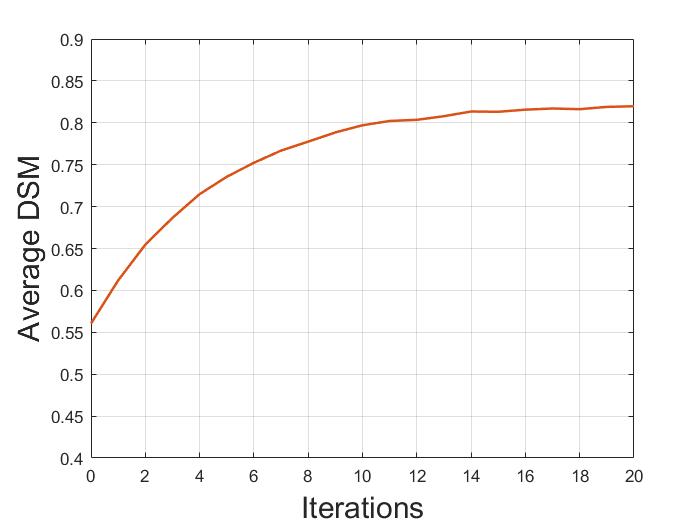}
        \caption{Average DSM}
        \label{fig: avg_dsm}
    \end{subfigure}

    \begin{subfigure}[b]{0.2\textwidth}
        \includegraphics[width=\textwidth]{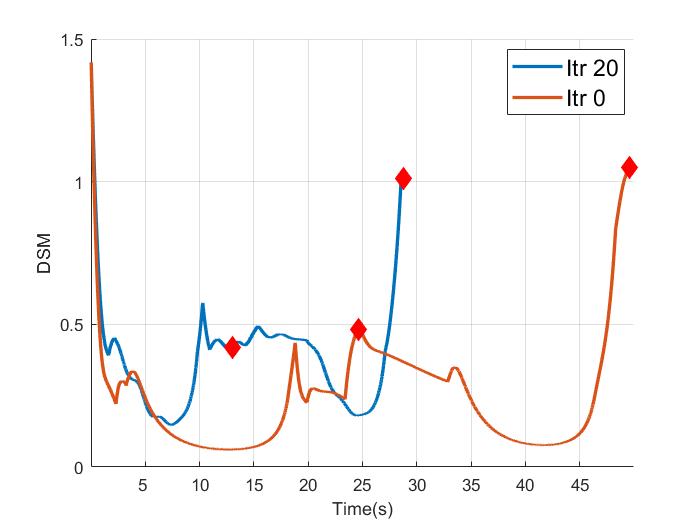}
        \caption{Task completion time}
        \label{fig: time}
    \end{subfigure}
    \begin{subfigure}[b]{0.2\textwidth}
        \includegraphics[width=\textwidth]{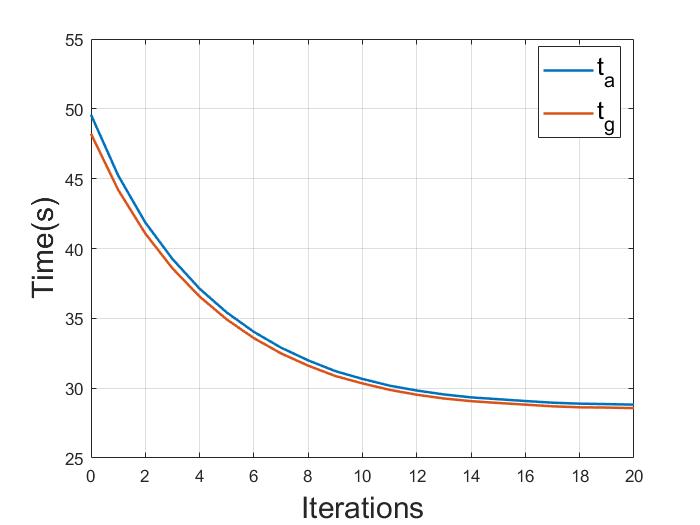}
        \caption{DSM over Time}
        \label{fig: compare_dsm}
    \end{subfigure}
    \caption{Iterative tuning performance}
    \label{fig: auto-tune}
\end{figure}
\vspace{-5pt}
Iterative tuning is used for the parameters in the feedback controller $K$. Fig.~\ref{fig: loss} illustrates the changes in the loss~\eqref{eq: loss} during 20 iterations. The results show that the tuning significantly decreases the loss through iterations. The improved control parameters, in turn, induce closer tracking of the agent to the governor, thus increasing the mean DSM in Fig.~\ref{fig: avg_dsm}. 
The governor's update process is dictated by the magnitude of the DSM. As a result, an increase in the mean DSM leads to a faster adjustment of the governor, which consequently accelerates the completion of the agent's task. Fig.~\ref{fig: time} illustrates a significant reduction in completion times for both the governor ($t_g$) and the agent ($t_a$), along with their difference ($|t_g - t_a|$) during the iterative process.  Fig.~\ref{fig: compare_dsm} compares the DSM between the initial and final iteration settings, with red diamond markers representing the time points when the agent visits the two targets. In particular, both DSM curves are above 0, indicating successful safe navigation through iterations. Moreover, in the final iteration, the agent reaches both targets and completes the STL task faster than in the initial setting and with a higher average DSM value, thus validating the effectiveness of the tuning.
\vspace{-4pt}
\subsection{Quadrotor model}
\subsubsection{Specifications}
The quadrotor model is an underactuated nonlinear system, and the dynamics are
\vspace{-3pt}
\begin{equation}
\begin{gathered}
\dot{x}=v_q,\quad m \dot{v}_q=m g e_3-f R e_3, \\
\dot{R}=R \hat{\Omega}, \quad
J \dot{\Omega}+\Omega \times J \Omega=M,
\end{gathered}
\end{equation}
\vspace{-2pt}
where $x, v_q\in \mathbb{R}^3$ is the location and velocity of the center of mass in the inertial frame, $f\in \mathbb{R}$ is the total thrust force generated by four rotors, $M\in \mathbb{R}^3$ is the total moment in the body-fixed frame, $R \in \mathrm{SO}(3)$ the rotation matrix from the body-fixed frame to the inertial frame, $\Omega \in \mathbb{R}^3$ is the angular velocity in the body-fixed frame, $m\in \mathbb{R}$ is the total mass and $J\in \mathbb{R}^{3\times 3}$ is the inertia matrix in the body-fixed frame. 

The authors in~\cite{lee2010geometric} develop a feedback-linearization model
to track a three-dimensional position and heading direction with control inputs $f, M$ chosen as
\begin{equation}
\vspace{-6pt}
\label{eq: fl-uav}
\begin{aligned}
f=- & \left(-k_x e_x-k_v e_v-m g \zeta_3+m \ddot{x}_d\right) \cdot R \zeta_3, \\
M=- & k_R e_R-k_{\Omega} e_{\Omega}+\Omega \times J \Omega \\
& -J\left(\hat{\Omega} R^T R_d \Omega_d-R^T R_d \dot{\Omega}_d\right),
\end{aligned}
\end{equation}
where $x_d(t)$ is the transnational command reference, and $e_x, e_v, e_R$ and $e_{\Omega}$ are the tracking errors between $x$ and $x_d$, $\zeta_3 = [0, 0, 1]^\top$.
Using feedback linearization in~\eqref{eq: fl-uav}, we obtain $\ddot x_d = u_d$ such that the position control for the nonlinear dynamics of the drone is simplified to controlling $x_d$.

The feedback linearization regulates the center of mass position of the quadrotor, and to ensure the entire frame of the quadrotor is safe, we inflate the obstacles by the distance from the center of mass to a rotor.
The STL specification is
\vspace{-3pt}
\begin{equation*}
    \levent_{[5, 30]}Reach_1 \land \levent_{[30, 50]}Reach_2.
\end{equation*}
\subsubsection{Performance}
Fig.~\ref{fig: drone_traj} shows the results from the 20th iteration. The figure on the left shows the navigation trajectory of the quadrotor through obstacles starting from the origin to a way point and then to the destination. The figure on the right shows the DSM over time, which remains positive.


\begin{figure}[hbt!]
    \centering
    \subfloat[\label{fig: erg_drone}]{%
     \includegraphics[width=0.5\linewidth]{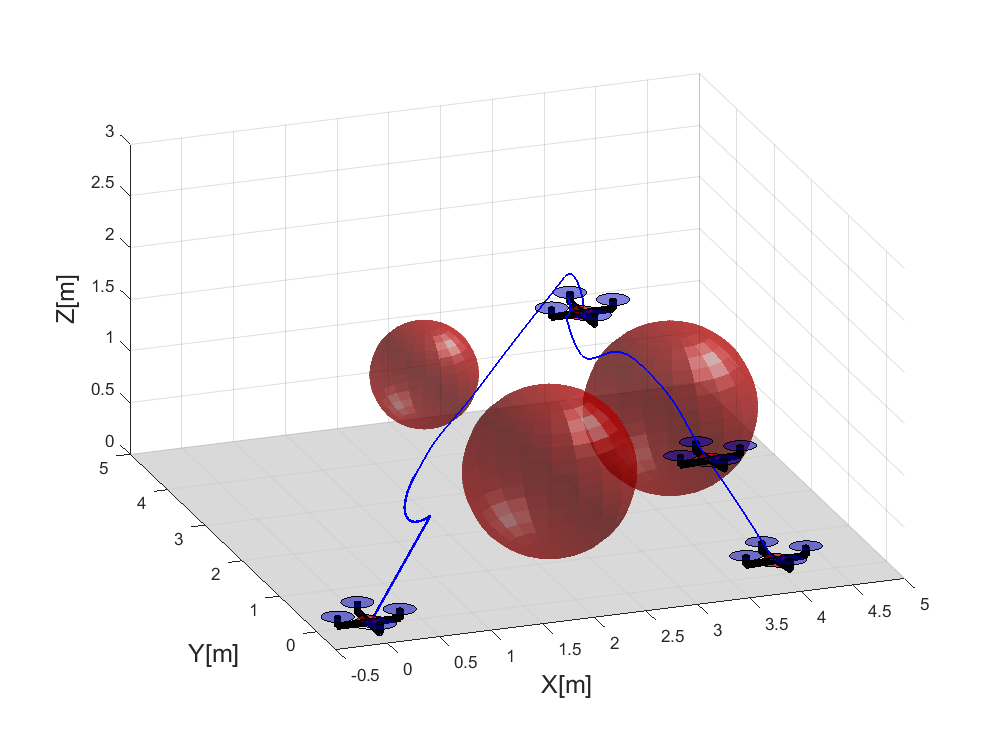}}
    \centering
    \subfloat[\label{fig: dsm_drone}]{%
    \includegraphics[width=0.45\linewidth]{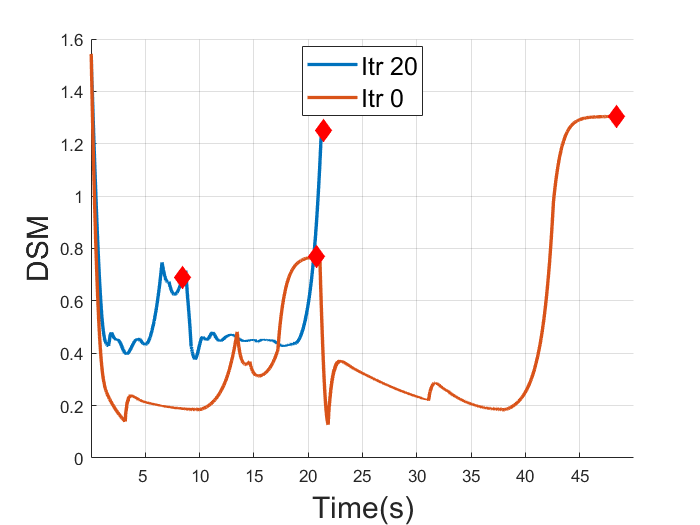}
       }
     \caption{ (a) Quadrotor trajectory. (b) DSM over time.}
    \label{fig: drone_traj}
\end{figure}

\vspace{-7pt}
\section{Conclusion}
This paper develops ERG-guided CBFs that assure safety for high-order linearizable systems with STL tasks. Our approach demonstrates that by employing the explicit reference governor, we can leverage first-order CBFs to manage a system with a high relative degree. Furthermore, the controller for such high-order systems can be optimized using gradient-based methods via iterative tuning, thus enhancing the performance of the CBFs. 
\vspace{-5pt}
\bibliographystyle{ieeetr}
\bibliography{references.bib}
\end{document}